\newcounter{MYtempeqncnt}
\def\ps@pprintTitle{%
\let\@oddhead\@empty
\let\@evenhead\@empty
\def\@oddfoot{}%
\let\@evenfoot\@oddfoot}
\begin{document}
\journal{Computers $\&$ Electrical Engineering}
\begin{frontmatter}

\author{Soheil Khavari Moghaddam and S. Mohammad Razavizadeh\\ 
School of Electrical Engineering, Iran University of Science and Technology, Tehran, Iran\\
khavari@elec.iust.ac.ir, smrazavi@iust.ac.ir }
\title{Joint Tilt Angle Adaptation and Beamforming in Multicell Multiuser Cellular Networks}

\begin{abstract}
3D beamforming is a promising approach for interference coordination in cellular networks which brings significant improvements in comparison with conventional 2D beamforming techniques.  This paper investigates the problem of joint beamforming design and tilt angle adaptation of the BS antenna array for maximizing energy efficiency (EE) in downlink of multi-cell multi-user coordinated cellular networks. An iterative algorithm based on fractional programming approach is introduced to solve the resulting non-convex optimization problem. In each iteration, users are clustered based on their elevation angle. Then, optimization of the tilt angle is carried out through a reduced complexity greedy search to find the best tilt angle for a given placement of the users. Numerical results show that the proposed algorithm achieves higher EE compared to the 2D beamforming techniques. 

\end{abstract}

\begin{keyword}
3D beamforming (3DBF)\sep energy efficiency\sep tilt angle optimization\sep vertical beamforming\sep convex optimization\sep coordinated beamforming\sep fractional programming\sep 2D beamforming.
\end{keyword}

\end{frontmatter}

\section{Introduction}\label{sec:Introduction}

Tilt angle adaptation which is also known as three dimensional beamforming (3DBF), full dimension multiple-input multiple-output (FD-MIMO) or vertical beamforming is a  promising technology for interference management and performance improvement in fifth generation (5G) cellular networks \cite{Andrews,drRazavizadeh}. In this technique by deploying active antenna systems (AAS) at the base station (BS) of cellular networks, it is possible to dynamically adapt the antenna tilt angle in each transmission interval \cite{Lee}. This can be done by modifying parameters of symmetrical 3D pattern introduced in \cite{3GPP}. 
The antenna arrays radiates a fan-shaped beam with a large half power beam width (HPBW) in horizontal plane, while radiating a sharp beam with a small HPBW in the vertical plane which prevents signal from leaking to adjacent cells. 
In contrast to the horizontal plane in which the direction of antenna main lobe is fixed (since the BS orientation is fixed for each sector)
, the antenna main lobe in the vertical plane can be steered to a desired direction by changing the tilt angle of the BS's antenna array \cite{drRazavizadeh}. 
To this end, we need information about users' locations and angle of arrival (AoA) at the BS. This tilt angle adjustment can improve some performance metrics such as spectral efficiency (SE), coverage probability or energy efficiency (EE) \cite{Andrews,drRazavizadeh}. 

Most of the previous works on 3DBF study the problem of maximizing of the spectral efficiency in various scenarios. In \cite{Tran} a game theory-based approach is proposed that maximizes the sum SE of a multi-cell network. In addition the authors approximate the signal-to-interference-plus-noise ratio (SINR) by its asymptotic value when the number of the BS antennas goes to infinity. 
 In \cite{Seifi1,Seifi2,Seifi3} the authors extract conditional ergodic sum SE of the network in terms of the BS antenna array tilt angle and then approximate it with complicated mathematical expressions. Hence the optimum tilt angles are found using exhaustive search over different possible tilt angles.

In general there are two 3DBF strategies: passive and active. In the passive 3DBF, the antenna pattern cannot be dynamically changed, and thus the antenna array tilt angle is fixed during several transmission intervals. This technique is suitable for applications like self organizing networks \cite{Aquino}. However in the active 3DBF, the antenna pattern can be dynamically changed i.e. the tilt angle is optimized by utilizing instantaneous user's location information and AoA in each transmission interval \cite{drRazavizadeh,Lee}. 
Thus, information about the users' locations is playing a vital role in designing of the 3DBF technique. The users' locations also can be modeled with the help of stochastic geometry (SG) \cite{Lee,Aquino,Shi}. To this end the users' locations is modeled by a uniform distribution and the BSs' are assumed to be located on the vertices of a hexagonal grid \cite{Shi}. In addition, the BSs' locations can also be modeled with Poisson Point Process (PPP) \cite{Aquino}. 
Since the location of the users is a slow varying parameter, it can be estimated almost exactly and the assumption of knowing these informations is a reasonable assumption \cite{Tran, Xiao}. The subject of AoA estimation is well studied in literature \cite{DrHaddadi}. Also there are some valuable works which concern with AoA estimation using novel low energy consuming machine learning approaches such as Bayesian compressive sensing algorithms \cite{Carlin,Wang}. Here similar to \cite{Lee,Tran,Seifi3,Xiao,Heath,Caire} we assume the information of the AoA of the users are accurately available at the BSs.

It is shown that the active 3DBF can achieve higher performance compared with the passive 3DBF since it adapts the antenna pattern to the instantaneous location of the users instead of using an average location of the users \cite{drRazavizadeh,Lee}.
  The problem of maximizing the sum SE in the active and passive 3DBF for a single cell network and two scenarios of  single-user and multi-user is presented in \cite{Lee}. Although in \cite{Lee} potential of the 3DBF in performance improvement and advantage of active 3DBF over \color{black} passive 3DBF are well studied, however because of the single cell arrangement the interference management property of the 3DBF is not fully exploited.

Although SE is a very important performance metric in the wireless networks, recently EE is also becoming more important from the network service provider's point of view. The 3DBF is one of the techniques that have been proposed to facilitate energy efficient communication in next generation of cellular networks. In this technique, by proper managing of interference and reducing the transmit power, the EE can be increased \cite{drRazavizadeh}. In spite of this fact, the problem of EE maximization by utilizing 3DBF has not been thoroughly investigated in the literature. In \cite{Aquino} by exploiting 3DBF, optimization of the sum EE in a two tier heterogeneous network (HetNet) is addressed.



In this paper, we investigate the problem of joint beamforming and tilt angle optimization at the BS antenna array for maximizing the total EE in a multi-cell multi-user network. In fact, on contrary to the previous works on 3DBF which have considered ergodic sum SE averaged over the users' locations, we investigate maximizing the total EE in a more realistic system with information on the users' locations in the BS. 
 In addition, it is worthwhile to note that in all previous works simple linear beamforming such as eigen beamforming or zero-forcing beamforming were considered while we design the optimum beamforming vectors through a non-linear optimization problem. To the best of our knowledge, there is no similar work which jointly design beamfroming vectors and optimize tilt angle.

In addition, in our scenario the BSs cooperate with each other and share their channel vectors to each user. The objective function is total instantaneous EE subject to sum transmit power constraint at the BS. The resulting optimization problem is non-convex and therefore, a fractional programming approach is employed to convert it to a convex problem. To convexify this problem, lemma 1 is introduced and then to solve the convex problem using block coordinate descent method, theorem 1 is introduced. Also to reduce the complexity of searching for finding the optimal tilt angle, the clustering algorithm is proposed in theorem 2.

Although the new problem is convex in terms of the beamforming vectors, it is still non-convex in terms of the tilt angle. To solve this problem, we divide the users into clusters and then in each cluster the optimum beamforming vectors are calculated. Finally, the optimum tilt angle is found according to the cluster in which the beamforming vectors maximize the total instantaneous EE. It will be shown that the optimal beamforming vectors are dependent on both users' elevation angle and channel gains. Simulation results demonstrate that our method outperforms the conventional 2D beamforming in which the antenna array pattern is omnidirectional and only precoding vectors are designed.

Rest of the paper is organized as follows. In Section \ref{sec:SYSTEM MODEL}, the system model and problem formulation are introduced. In Section \ref{sec:3D SOLUTION}, the proposed algorithm for solving the optimization problem is presented. In Section \ref{sec:CLUSTERING}, the clustering algorithm is introduced. Simulation results are presented in Section \ref{sec:SIMULATION}. Finally, Section \ref{sec:CONCLUSION} concludes the paper.  
\setlength{\arrayrulewidth}{1mm}
\setlength{\tabcolsep}{30pt}
\renewcommand{\arraystretch}{1}

{\rowcolors{2}{gray!30!white!50}{gray!5!white!40}
\begin{table}[!h]
\begin{tabular}{!{\vline}p{2.1cm}!{\vline}p{11cm}!{\vline} }

\hline
Parameter & Description\\
\hline
$\theta_j , \phi_j $  & Tilt and boresight angles of the $j$-th BS respectively.   \\
$SLL_{el} , SLL_{az}$ &  Side lobe levels in the elevation and azimuth domain respectively.    \\
$ \boldsymbol x_i$ &  Transmitted signal vector of the $i$-th BS.  \\
$ \boldsymbol \omega_{in}, d_{in}, y_{in} $  &Beamforming vector, data symbol and received signal of the $n$-th user in the $i$-th cell.  \\
$ \boldsymbol g_{ijm}, \beta_{ijm}$ &  Channel vector and large scale factor between the $m$-th user in the $j$-th cell and the $i$-th BS.   \\
$\theta_{3dB} , \phi_{3dB} $ &  Half power beamwidth of the elevation and azimuth patterns, respectively.  \\
$\theta_{ijk} , \phi_{ijk}$ & AoA of the $k$-th user in the $j$-th cell and the $i$-th BS  \\
$R_{jm}$ & Instantaneous rate of the $m$-th user in the $j$-th cell. \\
$R_{max}$ &  Maximum rate achievable if each user receives maximum power of its serving BS and no interference exists.\\
$\boldsymbol W , \boldsymbol \theta$ & Collection of all beamforming vectors and tilt angle of all BSs.\\
$P, P_c, P_0$ & Maximum transmit power available at the BS, RF-chain and constant power consumption of the BS site, respectively.\\
$\xi $ &  Power amplifier inefficiency\\
$M,L,K$ & The number of antennas at the BS, number of cells and the number of the users in each cell.\\
$\eta$ &  Energy efficiency variable.\\
$\epsilon , \delta $ & Thresholds for stop criterion of the inner and outer layer algorithms.\\
$ \tilde d_{jm} , \mu_{jm}, s_{jm}, e_{jm}$ &  Estimated symbol, filter coefficient,
slack variable and symbol estimation error at the UE of the $m$-th user in the $j$-th cell\\
\hline
\end{tabular}
\caption{List of all parameters throughout the paper.}
\end{table}
}


In the paper scalars are denoted by lower-case letters. Vectors and matrices are denoted by bold-face lower-case and upper-case letters, respectively. $ \prec $ stands for the generalized inequality.
 $\left( . \right) ^{H}$ is the complex conjugate transpose. $\mathbb{E} \left\{ . \right\} $ denotes statistical expectation and $||.||$ is the Euclidean distance (also known as $L^2-\text{norm} $). also $\left( . \right)^*$ represents the complex conjugate norm. $I_M$ is the Identity matrix of size $M$ and $\mathbb{C}^N$ is the $N-\text{dimensional}$ complex vector space. In addition, a list of all variables that are used in our paper is given in table 1.

\color{black}
\section{System Model} \label{sec:SYSTEM MODEL}

We consider a multi-cell network consisting of $L$ cells. Each cell has $K$ users that are uniformly distributed over the cell coverage region and a BS equipped with a linear antenna array consisting of $M$ active elements. Each user is only served by its own cell's BS. 

The user equipment's (UE's) antenna pattern is assumed to be isotropic and the 3D pattern of the BS antennas is modeled as \cite{3GPP} 

\begin{equation} \label{3D gain}
\begin{array}{ll}
 \alpha (\theta_{BS,i})= G_{\max}-\min \left[ 12 \left(\frac{\phi}{\phi_{3dB}}\right)^2 ,\text{SLL}_{az}\right]  - \min \left[ 12 \left(\frac{\theta_{BS,i}-\theta_{ijk}}{\theta_{3dB}}\right)^2 ,\text{SLL}_{el}\right].
\end{array}
\end{equation}
 where as depicted in Fig. \ref{fig:geosetup}, $\theta_{BS,i}$ is the tilt angle of the $i$-th BS which is defined as the angle between the horizon and the main lobe of vertical pattern of the BS antenna array. $\theta_{ijk}$ is the vertical AoA of the $k$-th user in the $j$-th cell to the $i$-th BS.
In addition $\phi =\phi_{BS,i}-\phi_{ijk}$ is the horizontal angle difference between boresight of the $i$-th BS's antenna and the horizontal AoA of the $k$-th user in the $j$-th cell to the $i$-th BS.
It should be noted that $\phi_{BS,i}$ is added to the generic model in \cite{3GPP} to suitably encompass the base stations with nonzero boresight angle.
Furthermore, $\text{SLL}_{az}=25~\text{dB}$ and $\text{SLL}_{el}=20~\text{dB}$ are side lobe levels in the azimuth and elevation planes, respectively. Also $\phi_{3dB} = 65^{\circ}$ and $\theta_{3dB} = 6^{\circ}$ are HPBWs in the azimuth and elevation planes, respectively \cite{Seifi2}. For a conventional 2D beamforming system, the array's gain in the vertical plane is ignored and only horizontal gain and the first constant term is considered.

During the signal transmission in the downlink, each BS broadcasts its signal to all users. This signal is obtained by multiplying the users' symbols in the corresponding beamforming vectors. Hence, the transmit signal from the $i$-th BS is obtained as follows
\begin{equation} \label{transmitsymbs}
\boldsymbol x_i = \sum\limits_{n = 1}^K {\boldsymbol {\omega} _{in}}{d_{in}}
\end{equation}
where $\boldsymbol {\omega} _{in} \in \mathbb C^{M \times 1}$ and $d_{in}$ are the beamforming vector and data symbol of the $n$-th user in the $i$-th cell, respectively. Moreover, the symbols $d_{in}$ are assumed to be uncorrelated with $\mathbb E\{ |d_{in}|^2 \}=1$. 
The received signal of the $m$-th user in the $j$-th cell can be written as
\begin{equation}
\begin{array}{ll} \label{receivsig}
{y_{jm}} = \sum\limits_{i = 1}^L \sqrt {\alpha \left( {{\theta _{BS, i}}} \right)}{\boldsymbol g_{ijm}^H{\boldsymbol x_i} } = \sum\limits_{i = 1}^L {\sum\limits_{n = 1}^K \sqrt {\alpha \left( {{\theta _{BS, i}}} \right)} {\boldsymbol g_{ijm}^H{\boldsymbol \omega _{in}}{d_{in}}  \; + } }\; {n_{jm}} \hfill \nonumber
\end{array} \end{equation}
\begin{equation} \begin{array}{ll} =\underbrace { \sqrt {\alpha \left( {{\theta _{BS, j}}} \right)} \boldsymbol g_{jjm}^H{\boldsymbol \omega _{jm}}{d_{jm}} }_{\text{desired}} +\underbrace {\sum\limits_{\begin{subarray}{ll}\; i = 1 \\ (i,n) \end{subarray}}^L {\sum\limits_{\begin{subarray}{ll}\; \;n = 1 \\ \ne  (j,m)\end{subarray}}^K \sqrt {\alpha \left( {{\theta _{BS, i}}} \right)} {\boldsymbol g_{ijm}^H{\boldsymbol \omega _{in}}{d_{in}} } }}_{\text{interference}}+\underbrace{n_{jm}}_{\text{noise}},
\end{array}
\end{equation}
where ${\boldsymbol g_{ijm}}\sim \mathcal{CN}\left( {0,{\beta _{ijm}} {\boldsymbol I_M}} \right)$ is the channel vector 
between the $m$-th user in the $j$-th cell and the BS in the $i$-th cell. $\beta _{ijm}$ is the large scale fading factor that includes both path-loss and shadow fading effects. The noise term ${n_{jm}}$ is a circularly symmetric complex gaussian random variable with zero mean and normalized variance i.e. $\mathbb E \{ |{n_{jm}}|^2 \}=1$. 
Instantaneous rate of the $m$-th user in the $j$-th cell can be written as
\begin{equation} \label{rateexpression}
\hat R_{jm}=\log_2\left(1+\frac{\alpha \left( {{\theta _{BS , j}}}\right){\left| {\boldsymbol g_{jjm}^H{\boldsymbol \omega _{jm}}} \right|^2} }{{\sum\limits_{\begin{subarray}\;\; i = 1 \\ (i,n) \end{subarray}}^L {\sum\limits_{\begin{subarray}{ll}\;\;n = 1 \\ \ne (j,m)\end{subarray}}^K {\alpha \left( {{\theta _{BS, i}}} \right)} {\left|\boldsymbol g_{ijm}^H{\boldsymbol \omega _{in}}\right|^2  } }}+1}\right).
\end{equation}
Using logarithm in base $e$, we have $R_{jm}=\ln(1+\text{SINR})=\frac{\hat R_{jm}}{\ln2}$. The overall EE of the network is defined as the total weighted sum SE divided by the total energy consumption in the network as follows
\begin{equation}\label{EEdef}
\text{EE} =\frac{{\sum\limits_{j,m} {{b_{jm}}{R_{jm}}} }}{{\xi {{\sum\limits_{j,m} {\left\| {{\boldsymbol \omega _{jm}}} \right\|} }^2} + ML{P_c} + L{P_0}}} = f\left( {\boldsymbol W,\boldsymbol \theta } \right).
\end{equation}
 In (\ref{EEdef}), $\boldsymbol W$ is the collection of all beamforming vectors and $\boldsymbol \theta = [\theta_{BS,1},\theta_{BS,2},...,\theta_{BS,L}]$ is the vector containing the tilt angles of all BSs. $b_{jm}$ is weight of the $m$-th user in the $j$-th cell and in fact it indicates each user's priority with respect to other users. $P_c$ is power consumption of an active RF-chain and $P_0$ is constant power consumption for maintenance of BS equipments \cite{Arnold}. $\xi \ge 0$ is a constant that accounts for the power amplifiers' inefficiency. Now we can formulate optimization problem of the network as follow
\begin{equation} \label{goalfunc}
\begin{gathered}
\arg \; \mathop {\max }\limits_{\boldsymbol W,\boldsymbol \theta }f\left( {\boldsymbol W,\boldsymbol \theta } \right) =
\frac{{{f_1}\left( {\boldsymbol W,\boldsymbol \theta } \right)}}{{{f_2}\left( {\boldsymbol W, \boldsymbol \theta } \right)}},  \hfill \\
\mathrm{subject~to}\quad  \sum\limits_{m = 1}^K {{{\left\| {{\boldsymbol \omega _{jm}}} \right\|}^2}}  \leqslant P \quad \forall j,  \quad j = 1,2,...,L, \hfill  \\
\qquad \qquad  \qquad{0_{1 \times L}} \prec \boldsymbol \theta  \prec {90^o} \times {\left[ {1,1,...,1} \right]_{1 \times L}}. \hfill 
\end{gathered} 
\end{equation}
where $f_1(\boldsymbol W) = {\sum\limits_{j,m} {{b_{jm}}{R_{jm}}} }$ and $f_2(\boldsymbol W) = {\xi {{\sum\limits_{j,m} {\left\| {{\boldsymbol \omega _{jm}}} \right\|} }^2} + ML{P_c} + L{P_0}}$ which are numerator and denominator of equation (\ref{goalfunc}). The first constraint in (\ref{goalfunc}) is the maximum transmit power available at the BS and the second constrain is the dynamic range of the tilt angles. 

\noindent It is possible to show that the following bounds are true for the objective function in (\ref{goalfunc})

\begin{equation}
\begin{cases}
\boldsymbol 0 < {f_1}\left( {\boldsymbol W,\boldsymbol \theta } \right) \leqslant {R_{\max }}  \\
{f_2}\left( {\boldsymbol W,\boldsymbol \theta } \right) \geqslant ML{P_c} + L{P_0}  \\ 
{f_2}\left( {\boldsymbol W,\boldsymbol \theta } \right) \leqslant LP + ML{P_c} + L{P_0} \\
{R_{\max }} = \sum\limits_{j = 1}^L {\sum\limits_{m = 1}^K {{{\log }_2}\left( {1 + \frac{{P{{\left\| {\boldsymbol g_{jjm}^H} \right\|}^2}}}{1}} \right)} }
\end{cases}
\end{equation}
where $R_{max}$ is the maximum rate for the case that the maximum power is allocated for each user and no interference exists.
\begin{figure}[t!]
  \includegraphics[width=\linewidth]{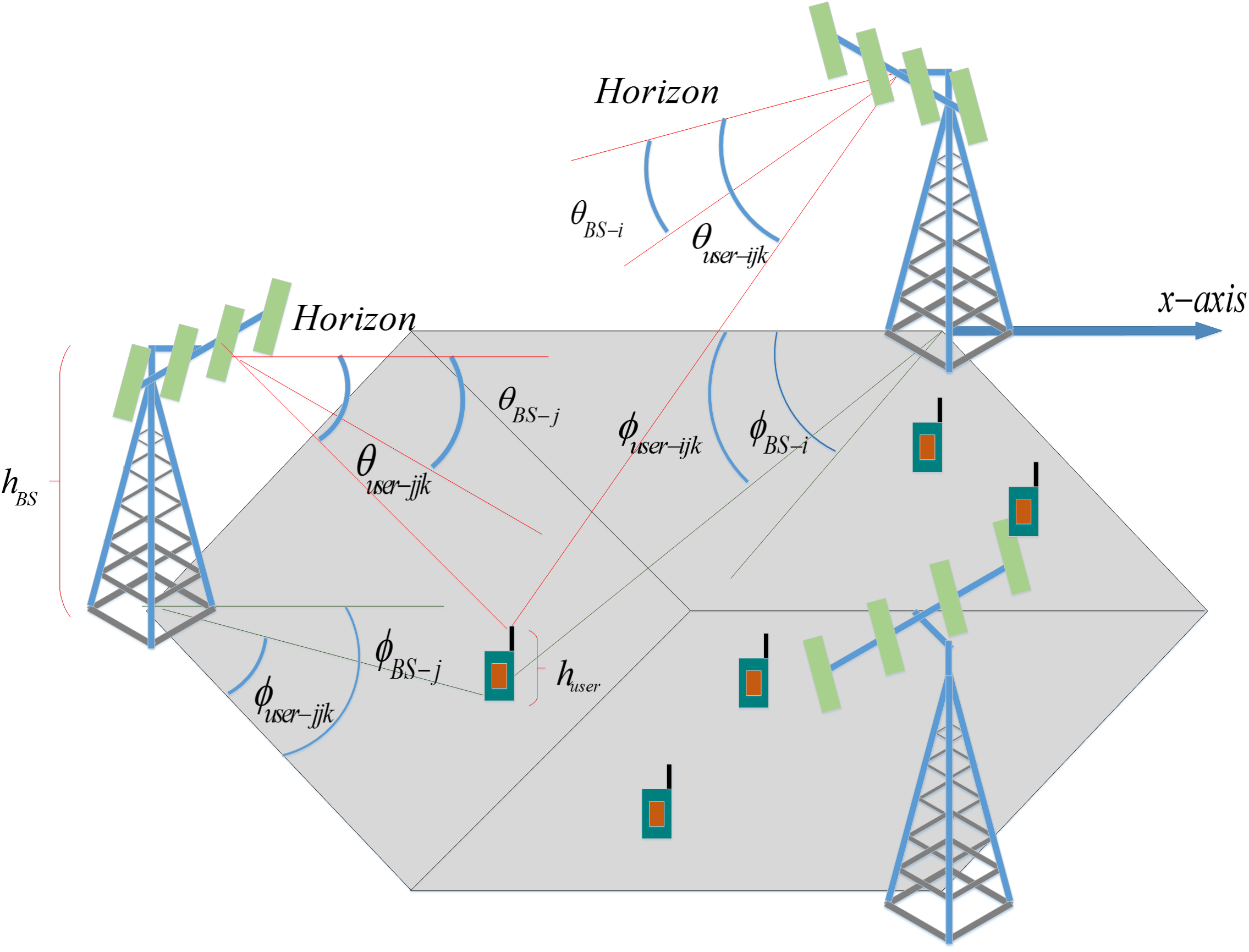}
  \caption{Multi-cell setup with three cells and illustration of 3D angles.}
  \label{fig:geosetup}
\end{figure}
\section{The EE Maximization Problem Solution} \label{sec:3D SOLUTION}

Since the objective function in (\ref{goalfunc}) is fractional and non-convex, fractional programming technique is applied to solve this problem \cite{Crouz}. First consider the following equation
\begin{equation} \label{equivalent1}
F\left( \eta  \right) =  \mathop {\max }\limits_{\left( {\boldsymbol W,\boldsymbol \theta } \right) \in \mathbb D}\left\{ {{f_1}\left( {\boldsymbol W,\boldsymbol \theta } \right) - \eta {f_2}\left( {\boldsymbol W,\boldsymbol \theta } \right)} \right\}
\end{equation}
where $\mathbb D = \{(W,\theta) | \sum\limits_{m = 1}^K {{{\left\| {{\boldsymbol \omega _{jm}}} \right\|}^2}}  \leqslant P \quad \forall j,\; {0_{1 \times L}} \prec \boldsymbol \theta  \prec {90^o} \times {\left[ {1,1,...,1} \right]_{1 \times L}}\}$ is feasible set of the problem. It is shown that $F(\eta)$ in equation (\ref{equivalent1}) has the following properties \cite{Crouz}
\begin{enumerate}[label=(\alph*)]
\item $F(\eta)$ is convex over $\mathbb R$,
\item $F(\eta)$ is continuous over $\mathbb R$,
\item $F(\eta)$ is strictly decreasing,
\item The equation $F(\eta) = 0$ always has a unique solution.
\end{enumerate}
Thus, we can conclude that the following statements are equivalent \cite{Crouz}:
\begin{enumerate}[label=(\alph*)]
\item $\mathop {\max }\limits_{\left({\boldsymbol W,\boldsymbol \theta } \right) \in \mathbb D} f\left( {\boldsymbol W,\boldsymbol \theta } \right) = \mathop {\max }\limits_{\left( {\boldsymbol W,\boldsymbol \theta }\right) \in \mathbb D} \frac{{{f_1}\left( {\boldsymbol W,\boldsymbol \theta } \right)}}{{{f_2}\left( {\boldsymbol W,\boldsymbol \theta } \right)}} = \eta $\\
\item $F\left( \eta  \right) = \mathop {\max }\limits_{\left( {\boldsymbol W,\boldsymbol \theta }\right) \in \mathbb D} \left\{ {{f_1}\left( {\boldsymbol W,\boldsymbol \theta } \right) - \eta {f_2}\left( {\boldsymbol W,\boldsymbol \theta } \right)} \right\} = 0$.
\end{enumerate}
In other words, solving univariate equation $F\left(\eta \right) = 0$ is equivalent to the maximization problem in (\ref{goalfunc}). It means that if one can find an $\eta$ such that the optimum value of problem (\ref{equivalent1}) is zero, then this optimum value will be also a solution to the problem (\ref{goalfunc}). To solve it, the problem can be divided into two parts or layers. The so-called outer layer searches for the optimum value of $\eta$ while the inner layer solves the equation (\ref{goalfunc2}) for each value of $\eta$ to find the optimum beamforming vectors and tilt angle. 

For solving the univariate equation (\ref{equivalent1}), we can employ the bisection algorithm as follows
\makeatletter
\def\BState{\State\hskip-\ALG@thistlm}
\makeatother
\begin{algorithm}
\caption{Outer Layer Algorithm}\label{etabisection}
\begin{spacing}{1.2}
\begin{algorithmic}[1]
\State $\textbf{Initialize:} \quad {\eta _{\min }} = 0, \; {\eta _{\max }} = \frac{{{R_{\max }}}}{{ML{P_c} + L{P_0}}}, \; \varepsilon  = {10^{ - 3}}$
\While {($\left| {{\eta _{\max }} - {\eta _{\min }}} \right| \geqslant \varepsilon$)}
\State  assign $\eta  = \frac{{{\eta _{\min }} + {\eta _{\max }}}}{2}$, now solve (\ref{goalfunc2}) to find the optimum values of $(\boldsymbol W^{opt},\boldsymbol \theta^{opt})$ and calculate $F(\eta)$.
\If {($F(\eta) >0 \; $)} $\;\eta_{max}=\eta$ \Else $\; \eta _{\min}=\eta$
\EndIf
\EndWhile \textbf{end while}
\end{algorithmic}
\end{spacing}
\end{algorithm}

As we see in algorithm 1, for each value of $\eta$, the problem (\ref{goalfunc2}) needs to be solved. This optimization problem can also be written as
\begin{equation} \label{goalfunc2}
\begin{gathered}
\arg \; \mathop {\max }\limits_{\boldsymbol W,\boldsymbol \theta }G\left(\boldsymbol W, \boldsymbol \theta \right) = \left( {\sum\limits_{j,m} {{b_{jm}}{R_{jm}}}  - \eta \xi {{\sum\limits_{j,m} {\left\| {{\boldsymbol \omega _{jm}}} \right\|} }^2}} \right)  \hfill \\
\mathrm{subject~to} \quad  \sum\limits_{m = 1}^K {{{\left\| {{\boldsymbol \omega _{jm}}} \right\|}^2}}  \leqslant P \quad  \forall j,\quad  j = 1,2,...,L \hfill \\
\qquad \qquad \quad \quad{0_{1 \times L}} \prec \boldsymbol \theta  \prec {90^o} \times {\left[ {1,1,...,1} \right]_{1 \times L}} \hfill
\end{gathered}
\end{equation}
In order to solve (\ref{goalfunc2}), we use the method introduced in \cite{Cioffi}. According to this method, the data rate of users $R_{jm}$ can be written in terms of achievable minimum mean square error (MMSE) of estimated data symbols in the UE's. This can be done by introducing optimum filter variables $ \mu _{jm}$ which are defined by the following equation
\begin{equation}\label{filtervar}
 {\tilde d_{jm}} = \mu _{jm}^*{y_{jm}} 
\end{equation}

 where the $\tilde d_{jm}$ is the estimated data symbol in the $m$-th user in the $j$-th cell. For deriving the MSE, let us introduce $\tilde e_{jm}$ as

 \begin{equation}
\begin{gathered}
\tilde e_{jm} = \left( {\mu _{jm}^*\sum\limits_{i = 1}^L {\sum\limits_{n = 1}^K {\boldsymbol g_{ijm}^H{\boldsymbol \omega _{in}}{d_{in}}\sqrt {\alpha \left( {{\theta _{BS , i}}} \right)}  + } } {n_{jm}} - {d_{jm}}} \right). \hfill 
\end{gathered}
\end{equation}
  Then we can write MSE in terms of the receiving filter coefficients and channel parameters as

\begin{equation} \label{MMSE}
\begin{gathered}
{e_{jm}} = \mathbb{E}\left\{ {\left( {{{\tilde d}_{jm}} - {d_{jm}}} \right){{\left( {{{\tilde d}_{jm}} - {d_{jm}}} \right)}^*}} \right\} = \mathbb{E}\left\{ {\tilde e_{jm}{\tilde e_{jm}^*}} \right\} \hfill
\\
 = {\left| {{\mu _{jm}}} \right|^2}\left( {\sum\limits_{i = 1}^L {\sum\limits_{n = 1}^K {{{\left| {\boldsymbol g_{ijm}^H{\boldsymbol \omega _{in}}} \right|}^2} {\alpha \left( {{\theta _{BS , i}}} \right)}  + } } 1} \right) \hfill \\
- \mu _{jm}^*\boldsymbol g_{jjm}^H{\boldsymbol \omega _{jm}}\sqrt {\alpha \left( {{\theta _{BS , j}}} \right)}  - {\mu _{jm}}\boldsymbol \omega _{jm}^H{\boldsymbol g_{jjm}}\sqrt {\alpha \left( {{\theta _{BS , j}}} \right)}  + 1. \hfill
\end{gathered}
\end{equation}

In the sequel, we will use the following lemma \cite{Bore}:
\theoremstyle{definition}\newtheorem{lemma}{Lemma}
\begin{lemma} \label{lemma}
Maximum value of function $f\left(\boldsymbol S \right) =  - \text{Tr}\left( {\boldsymbol{SE}} \right) + \log \left( {\det \left(\boldsymbol S \right)} \right) + d$ which is the solution of following optimization problem. The solution of
$$\mathop {\max }\limits_{\boldsymbol S \in {\mathbb{C}^{d \times d}},\boldsymbol S \succ 0}f\left(\boldsymbol  S \right)$$

\noindent exists at
$${\boldsymbol S^{opt}} = {\boldsymbol E^{ - 1}}$$ 
and the optimum value is
$$f\left( {{\boldsymbol S^{opt}}} \right) = \log \left( {\det \left( {{\boldsymbol E^{ - 1}}} \right)} \right).$$
\end{lemma}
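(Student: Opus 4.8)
The plan is to exploit the concavity of $f$ on the positive-definite cone and to locate its unique stationary point through first-order optimality conditions. First I would observe that $-\text{Tr}(\boldsymbol{SE})$ is linear, hence concave, in $\boldsymbol{S}$, that $\log(\det(\boldsymbol{S}))$ is strictly concave on $\{\boldsymbol{S}\succ 0\}$, and that the additive constant $d$ is irrelevant. Consequently $f$ is strictly concave, so any stationary point is automatically the unique global maximizer and no boundary analysis is needed.

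Next I would compute the gradient. Using the standard matrix identities $\nabla_{\boldsymbol{S}}\,\text{Tr}(\boldsymbol{SE})=\boldsymbol{E}^{H}$ and $\nabla_{\boldsymbol{S}}\log(\det(\boldsymbol{S}))=\boldsymbol{S}^{-H}$, the stationarity condition $\nabla_{\boldsymbol{S}}f=-\boldsymbol{E}^{H}+\boldsymbol{S}^{-H}=\boldsymbol{0}$ gives $\boldsymbol{S}^{-1}=\boldsymbol{E}$, i.e. $\boldsymbol{S}^{opt}=\boldsymbol{E}^{-1}$. Substituting back, the cancellation $\text{Tr}(\boldsymbol{E}^{-1}\boldsymbol{E})=\text{Tr}(\boldsymbol{I}_{d})=d$ against the constant term $+d$ leaves precisely the claimed optimal value $f(\boldsymbol{S}^{opt})=\log(\det(\boldsymbol{E}^{-1}))$.

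An alternative route that avoids complex matrix differentiation altogether, and the one I would adopt for full rigor, is a direct inequality. Writing $f(\boldsymbol{S})-f(\boldsymbol{E}^{-1})=-\text{Tr}(\boldsymbol{SE})+d+\log(\det(\boldsymbol{SE}))$ and letting $\lambda_{1},\dots,\lambda_{d}>0$ denote the eigenvalues of $\boldsymbol{SE}$ (positive because $\boldsymbol{SE}$ is similar to the positive-definite matrix $\boldsymbol{E}^{1/2}\boldsymbol{S}\boldsymbol{E}^{1/2}$), this difference equals $\sum_{i=1}^{d}(\log\lambda_{i}-\lambda_{i}+1)\leqslant 0$ by the elementary bound $\log x\leqslant x-1$, with equality if and only if every $\lambda_{i}=1$, i.e. $\boldsymbol{SE}=\boldsymbol{I}$, i.e. $\boldsymbol{S}=\boldsymbol{E}^{-1}$. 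This establishes both the maximizer and the value simultaneously.

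The main obstacle I anticipate is handling the complex setting carefully: the gradient identities above implicitly invoke Wirtinger (conjugate) derivatives, and one must also confirm that the stationary point $\boldsymbol{E}^{-1}$ is feasible, which requires $\boldsymbol{E}\succ 0$. This positive-definiteness is implicit in the application, where $\boldsymbol{E}$ is a mean-square-error matrix, and should be stated as a hypothesis. The eigenvalue argument neatly sidesteps the differentiation subtleties and makes the equality condition fully transparent, so I would present it as the definitive proof and keep the calculus derivation only as motivation.
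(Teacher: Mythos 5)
Your proof is correct, and in fact it supplies more than the paper does: the paper states this lemma with only a citation to Borwein--Lewis and gives no proof at all, relying on the reader to accept it (the proof of Theorem~1 later reveals the intended mechanism, namely setting the first derivatives to zero, which is exactly your first route). Your stationarity computation matches that implicit approach: $-\boldsymbol{E} + \boldsymbol{S}^{-1} = \boldsymbol{0}$ yields $\boldsymbol{S}^{opt} = \boldsymbol{E}^{-1}$, and the cancellation $\mathrm{Tr}(\boldsymbol{E}^{-1}\boldsymbol{E}) = d$ against the additive constant gives the stated optimal value. Your second, eigenvalue-based argument --- writing $f(\boldsymbol{S}) - f(\boldsymbol{E}^{-1}) = \sum_{i}(\log\lambda_i - \lambda_i + 1) \leqslant 0$ for the eigenvalues $\lambda_i$ of $\boldsymbol{SE}$ --- is a genuine improvement: it is self-contained, avoids Wirtinger-derivative subtleties in the complex setting, and delivers the uniqueness of the maximizer via the equality case of $\log x \leqslant x - 1$, none of which the paper establishes. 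You are also right to flag that $\boldsymbol{E} \succ 0$ must be assumed (otherwise the supremum is $+\infty$ along directions where $\boldsymbol{E}$ fails to be positive definite); the paper leaves this hypothesis implicit, justified only by the fact that $\boldsymbol{E}$ is an MSE matrix in the application. The one cosmetic caveat is that in the paper's actual use of the lemma (Theorem~1) the quantities $s_{jm}$ and $e_{jm}$ are scalars, so the full matrix machinery reduces to the elementary scalar fact $\max_{s>0}(-se + \log s + 1) = -\log e$ at $s = 1/e$, which both of your arguments cover as the case $d = 1$.
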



By using lemma \ref{lemma}, the optimization problem in (\ref{goalfunc2}) can be rewritten as 
\begin{multline} \label{goalfunc3}
\mathop {\max }\limits_{\boldsymbol W,\boldsymbol U,\boldsymbol S,\boldsymbol \theta }  {\rm H}\left( {\boldsymbol W,\boldsymbol U,\boldsymbol S,\boldsymbol \theta } \right) \hfill \\
\text{subject to}  \; \sum\limits_{m = 1}^K {{{\left\| {{\boldsymbol \omega _{jm}}} \right\|}^2}}  \leqslant P  \quad \forall j,  j = 1,2,...,L  \hfill \\
\qquad \qquad \quad{\boldsymbol 0_{1 \times L}} \prec \boldsymbol \theta  \prec {90^o} \times {\left[ {1,1,...,1} \right]_{1 \times L}} \hfil 
\end{multline}
where new objective function $\rm {\rm H}\left( {\boldsymbol W,\boldsymbol U,\boldsymbol S,\boldsymbol \theta } \right)$ and set of filter variables $\boldsymbol U$ slack variables $\boldsymbol S$ are defined as
\begin{multline} \label{defgoal3}
{\rm H}\left( {\boldsymbol W,\boldsymbol U,\boldsymbol S,\boldsymbol \theta } \right)= \sum\limits_{j,m} {\left( { - {b_{jm}}{e_{jm}}{s_{jm}} + {b_{jm}}\log \left( {{s_{jm}}} \right)} \right)}  
\hfill \\+ \sum\limits_{j,m} {\left( {{b_{jm}} - \eta \xi {{\left\| {{\boldsymbol \omega _{jm}}} \right\|}^2}} \right)} \hfill \\
\boldsymbol U = \left\{ {{\mu _1},{\mu _2},...,{\mu _L}} \right\} \quad \text{and} \quad{\mu _j} = \left\{ {{\mu _{j1}},{\mu _{j2}},...,{\mu _{jK}}} \right\} \hfill \\
\boldsymbol S = \left\{ {{s_1},{s_2},...,{s_L}} \right\} \quad \; \; \text{and} \quad {s_j} = \left\{ {{s_{j1}},{s_{j2}},...,{s_{jK}}} \right\} \hfill
\end{multline}
Now by substituting (\ref{MMSE}) in (\ref{goalfunc3}), we get 
\begin{equation} \label{goalfunc4}
\begin{gathered}
\mathop {\max }\limits_{{\boldsymbol W_j},{\theta _j}} ( - \sum\limits_{m = 1}^K {\sum\limits_{i = 1}^L {\sum\limits_{n = 1}^K {{b_{in}}{s_{in}}{{\left| {{\mu _{in}}} \right|}^2}{{\left| {\boldsymbol g_{jin}^H{\boldsymbol \omega _{jm}}} \right|}^2}\alpha \left( {{\theta _{BS , j}}} \right)} } }  \hfill \\
+ \sum\limits_{m = 1}^K {{b_{jm}}{s_{jm}}\mu _{jm}^*\boldsymbol g_{jjm}^H{\boldsymbol \omega _{jm}}\sqrt {\alpha \left( {{\theta _{BS , j}}} \right)} } \; ) \hfill \\+ ( \sum\limits_{m = 1}^K {{b_{jm}}{s_{jm}}{\mu _{jm}}\boldsymbol \omega _{jm}^H{\boldsymbol g_{jjm}}\sqrt {\alpha \left( {{\theta _{BS , j}}} \right)} }  - \eta \xi {\left\| {{\boldsymbol \omega _{jm}}} \right\|^2}\; ) \hfill \\
\text{subject to}  \; \sum\limits_{m = 1}^K {{{\left\| {{\boldsymbol \omega _{jm}}} \right\|}^2}}  \leqslant P  \quad \forall j,  j = 1,2,...,L \qquad \hfill \\ \qquad \qquad \; \;{\boldsymbol 0_{1 \times L}} \prec \boldsymbol \theta  \prec {90^o} \times {\left[ {1,1,...,1} \right]_{1 \times L}} \hfill
\end{gathered}
\end{equation}
By substituting (\ref{MMSE}) in (\ref{goalfunc3}) and (\ref{defgoal3}), the optimization problem of (\ref{goalfunc3}) decouples among the BSs and each BS must solve its own problem. However,  sharing of filter coefficients and slack variables through backhaul links is still necessary.
\theoremstyle{definition}\newtheorem{theo}{Theorem}
\begin{theo} \label{theorem1}
For each value of $\boldsymbol W$ and $\boldsymbol \theta$, the optimum values of $\boldsymbol U$ and $\boldsymbol S$ which maximizes (\ref{goalfunc4}) are
\begin{multline} \label{optfilt}
\mu _{jm}^{opt} = \frac{{\boldsymbol g_{jjm}^H{\boldsymbol \omega _{jm}}\sqrt {\alpha \left( {{\theta _{BS , j}}} \right)} }}{{\sum\limits_{i = 1}^L {\sum\limits_{n = 1}^K {{{\left| {\boldsymbol g_{ijm}^H{\boldsymbol \omega _{in}}} \right|}^2}\sqrt {\alpha \left( {{\theta _{BS , i}}} \right)}  + } } 1}}\hfill \\
\frac{1}{{{{\hat e}_{jm}}}} = 1 - \frac{{{{\left| {\boldsymbol g_{jjm}^H{\boldsymbol \omega _{jm}}} \right|}^2}\alpha \left( {{\theta _{BS , j}}} \right)}}{{\sum\limits_{i = 1}^L {\sum\limits_{n = 1}^K {{{\left| {\boldsymbol g_{ijm}^H{\boldsymbol \omega _{in}}} \right|}^2}\sqrt {\alpha \left( {{\theta _{BS , i}}} \right)}  + } } 1}}  \\ s_{jm}^{opt} = \frac{1}{{{{\hat e}_{jm}}}} \hfill
\end{multline}
\end{theo}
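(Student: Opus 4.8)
The plan is to exploit the separable, block-wise structure of the objective $\mathrm{H}(\boldsymbol W, \boldsymbol U, \boldsymbol S, \boldsymbol\theta)$ defined in (\ref{defgoal3}). For a fixed pair $(\boldsymbol W, \boldsymbol\theta)$, the filter block $\boldsymbol U$ and the slack block $\boldsymbol S$ enter $\mathrm H$ only through the per-user terms $-b_{jm} e_{jm} s_{jm} + b_{jm}\log(s_{jm})$, while the power-penalty term $-\eta\xi\|\boldsymbol\omega_{jm}\|^2$ and both constraints involve neither $\mu_{jm}$ nor $s_{jm}$. Consequently the inner maximization over $(\boldsymbol U, \boldsymbol S)$ decouples completely across the user indices $(j,m)$, and for each $(j,m)$ I can treat $\mu_{jm}$ and $s_{jm}$ as independent scalar subproblems.

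First I would optimize over the filter coefficient $\mu_{jm}$. Since $b_{jm} > 0$ and $s_{jm} > 0$, maximizing $\mathrm H$ over $\mu_{jm}$ is equivalent to minimizing the MSE $e_{jm}$ given in (\ref{MMSE}); crucially, this minimizer does not depend on $s_{jm}$, which is what legitimizes handling the two blocks sequentially. Viewed as a function of the complex scalar $\mu_{jm}$, the quantity $e_{jm}$ is a convex quadratic whose coefficient of $|\mu_{jm}|^2$ is the strictly positive number $\sum_{i,n}|\boldsymbol g_{ijm}^H\boldsymbol\omega_{in}|^2\alpha(\theta_{BS,i}) + 1$. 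Setting the Wirtinger derivative $\partial e_{jm}/\partial\mu_{jm}^{*} = 0$ and solving the resulting linear equation yields the classical MMSE receiver $\mu_{jm}^{opt}$ stated in (\ref{optfilt}), and convexity guarantees this stationary point is the unique global minimizer.

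Next I would substitute $\mu_{jm}^{opt}$ back into (\ref{MMSE}) to obtain the minimum MSE $\hat e_{jm} = 1 - \frac{|\boldsymbol g_{jjm}^H\boldsymbol\omega_{jm}|^2\alpha(\theta_{BS,j})}{\sum_{i,n}|\boldsymbol g_{ijm}^H\boldsymbol\omega_{in}|^2\alpha(\theta_{BS,i}) + 1}$. With $\mu_{jm}$ fixed at its optimum, the remaining dependence of $\mathrm H$ on $s_{jm}$ is precisely the scalar instance ($d = 1$, $\boldsymbol E = \hat e_{jm}$, $\boldsymbol S = s_{jm}$) of Lemma \ref{lemma}, scaled by the positive weight $b_{jm}$; the lemma delivers $s_{jm}^{opt} = \hat e_{jm}^{-1}$ at once, a conclusion one can also reach directly by solving $\partial/\partial s_{jm}(-b_{jm}\hat e_{jm} s_{jm} + b_{jm}\log s_{jm}) = 0$. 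Since the per-user objective is concave in $\mu_{jm}$ and concave in $s_{jm}$, the two stationarity conditions are sufficient for global optimality, which completes the proof. I expect the main obstacle to be the algebraic back-substitution in this last step: verifying that the three $\mu$-dependent terms in $e_{jm}$ collapse into the single SINR-type fraction above. The decoupling argument, the MMSE stationarity condition, and the invocation of Lemma \ref{lemma} are otherwise routine.
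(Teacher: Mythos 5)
Your proposal is correct and follows essentially the same route as the paper: the paper's proof simply asserts that the stationarity conditions $\partial \mathrm{H}/\partial \mu_{jm}^{*}=0$ and $\partial \mathrm{H}/\partial s_{jm}=0$ are solved to obtain (\ref{optfilt}), and you carry out exactly these computations while additionally supplying the per-user decoupling and concavity arguments that make the stationary points globally optimal. One minor remark: your back-substitution correctly yields $\alpha\left(\theta_{BS,i}\right)$ rather than $\sqrt{\alpha\left(\theta_{BS,i}\right)}$ in the denominators, consistent with the quadratic coefficient in (\ref{MMSE}), so the square roots printed in the denominators of (\ref{optfilt}) appear to be a typographical error in the paper rather than a flaw in your derivation.
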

\begin{proof}
Similar to lemma \ref{lemma}, to find the optimum value, two equations $\frac{\partial {\rm H}\left( {\boldsymbol W,\boldsymbol U,\boldsymbol S,\boldsymbol \theta } \right)}{\partial s_{j,k}}=0$  and $\frac{\partial {\rm H}\left( {\boldsymbol W,\boldsymbol U,\boldsymbol S,\boldsymbol \theta } \right)}{\partial \mu^*_{j,k}}=0$ must be solved which results in (\ref{optfilt}).
\end{proof}
Clearly as Theorem \ref{theorem1} states, if we have optimum values of $\boldsymbol W$ and $\boldsymbol \theta$ then the optimum values of $\boldsymbol U$ and $\boldsymbol S$ are derived. The optimization problem of (\ref{goalfunc3}) is equivalant to finding the optimum values of $\boldsymbol W$ and $\boldsymbol \theta$ in (\ref{goalfunc4}). 
Now with help of Theorem \ref{theorem1}, we aim to solve (\ref{goalfunc4}) using its dual problem (Lagrangian method) and block coordinate descent method.

Now we construct Lagrangian dual problem as
\begin{equation}\label{lagrange}
\begin{gathered}
L\left( {{\boldsymbol W_j},{\lambda _j},{\theta _{BS,j}}} \right) = \hfill \\ - \sum\limits_{m = 1}^K {\left( {\sum\limits_{i = 1}^L {\sum\limits_{n = 1}^K {{b_{in}}{s_{in}}{{\left| {{\mu _{in}}} \right|}^2}{{\left| {\boldsymbol g_{jin}^H{\boldsymbol \omega _{jm}}} \right|}^2}\alpha \left( {{\theta _{BS , j}}} \right) - {b_{jm}}{s_{jm}}\mu _{jm}^*\boldsymbol g_{jjm}^H{\boldsymbol \omega _{jm}}\sqrt {\alpha \left( {{\theta _{BS , j}}} \right)} } } } \right)} \hfill \\
+ \left( {\sum\limits_{m = 1}^K {\left( {{b_{jm}}{s_{jm}}{\mu _{jm}}\boldsymbol \omega _{jm}^H{\boldsymbol g_{jjm}}\sqrt {\alpha \left( {{\theta _{BS , j}}} \right)}  - \eta \zeta {{\left\| {{\boldsymbol \omega _{jm}}} \right\|}^2}} \right)} } \right) - {\lambda _j}\left( {\sum\limits_{m = 1}^K {{{\left\| {{\boldsymbol \omega _{jm}}} \right\|}^2}}  - P} \right). \hfill
\end{gathered}
\end{equation}
Using these two equalities \cite{Boyd}
\begin{multline} \nonumber
\frac{{\partial {{\left\| {{\boldsymbol \omega _{jm}}} \right\|}^2}}}{{\partial {\boldsymbol \omega _{jm}}}} = 2{\boldsymbol \omega _{jm}}, \hfill \\
\frac{{\partial {{\left| {\boldsymbol g_{jin}^H{\boldsymbol \omega _{jm}}} \right|}^2}}}{{\partial {\boldsymbol \omega _{jm}}}} = \frac{{\partial \left( {\boldsymbol \omega _{jm}^H{\boldsymbol g_{jin}}\boldsymbol g_{jin}^H{\boldsymbol \omega _{jm}}} \right)}}{{\partial {\boldsymbol \omega _{jm}}}} = 2{\boldsymbol g_{jin}}\boldsymbol g_{jin}^H{\boldsymbol \omega _{jm}} \hfill
\end{multline}
and also using KKT conditions, the optimum solution of (\ref{lagrange}) for the beamforming vectors at a given $\theta$ can be calculated as follows
\begin{multline}
\frac{{\partial L\left( {{\boldsymbol W_j},{\lambda _j},{\theta _j}} \right)}}{{\partial {\boldsymbol \omega _{jm}}}} =  - 2\sum\limits_{i = 1}^L {\sum\limits_{n = 1}^K {{b_{in}}{s_{in}}{{\left| {{\mu _{in}}} \right|}^2}{\boldsymbol g_{jin}}\boldsymbol g_{jin}^H{\boldsymbol \omega _{jm}}\alpha \left( {{\theta _{BS - j}}} \right)} } \hfill \\
 + 2{b_{jm}}{s_{jm}}{\mu _{jm}}{\boldsymbol g_{jjm}}\sqrt {\alpha \left( {{\theta _{BS - j}}} \right)}  - 2\eta \zeta {\boldsymbol \omega _{jm}} = 0 \hfill \\
\Rightarrow \left( {\sum\limits_{i = 1}^L {\sum\limits_{n = 1}^K {{b_{in}}{s_{in}}{{\left| {{\mu _{in}}} \right|}^2}{\boldsymbol g_{jin}}\boldsymbol g_{jin}^H\alpha \left( {{\theta _{BS - j}}} \right)} }  + \eta \zeta {\boldsymbol I_M}} \right){\boldsymbol \omega _{jm}} = {b_{jm}}{s_{jm}}{\mu _{jm}}{\boldsymbol g_{jjm}}\sqrt {\alpha \left( {{\theta _{BS - j}}} \right)}.
\end{multline}
This yields a solution as follows
\color{black}

\begin{multline} \label{beamformer}
\frac{{\partial L\left( {{\boldsymbol W_j},{\lambda _j},{\theta _j}} \right)}}{{\partial {\boldsymbol \omega _{jm}}}} = 0   \\ \Rightarrow   {\boldsymbol \omega _{jm}} = {b_{jm}}{s_{jm}}{\mu _{jm}}{\left( {{\boldsymbol A_j} + {\lambda _j}{\boldsymbol I_M}} \right)^\dag }{\boldsymbol g_{jjm}}\sqrt {\alpha \left( {{\theta _{BS , j}}} \right)} \hfill \\ \Rightarrow   {\boldsymbol \omega _{jm}} = {b_{jm}}{s_{jm}}{\mu _{jm}}{\left( {{\boldsymbol A_j} + {\lambda _j^{*}}{\boldsymbol I_M}} \right)^\dag }{\boldsymbol g_{jjm}}\sqrt {\alpha \left( {{\theta _{BS , j}}} \right)} \hfill
\end{multline}
where ${\boldsymbol A_j} = \sum\limits_{i = 1}^L {\sum\limits_{n = 1}^K {{b_{in}}{s_{in}}{{\left| {{\mu _{in}}} \right|}^2}{\boldsymbol g_{jin}}\boldsymbol g_{jin}^H\alpha \left( {{\theta _{BS , j}}} \right) + \eta \zeta {\boldsymbol I_M}} }$.
In (\ref{beamformer}) the optimum value of Lagrangian's multiplier $\lambda_j$ must be found through one dimensional search which is a common method in optimization via dual problem \cite{He,Cioffi}. We denote this optimum value with $\lambda_j^{*}$. As stated earlier, we see in (\ref{beamformer}) that optimum values of beamforming vectors are dependent on both channel gains and users' elevations. Therefore finding optimum value of the tilt angle is inevitable. Unfortunately (\ref{lagrange}) is a non-convex function in terms of the tilt angle. One way to find the optimum value of the tilt angle is through one dimensional search but this induces a heavy computational task on network. In the next section a new algorithm is proposed for reducing the complexity significantly. 
\section{Clustering Algorithm} \label{sec:CLUSTERING}
There are some local maximums for $\theta$ in the maximization problem (\ref{lagrange}). Finding the global maximum is possible by searching feasible set of tilt angle $\left[ {{\theta }_{\max}},{{\theta }_{\min}} \right]$ where ${{\theta }_{\max}}$ and ${{\theta }_{\min}}$ are the maximum and minimum AoA of the users in the cell, respectively \cite{Lee,Seifi1,Seifi2,Seifi3}. This approach is reasonable when we optimize an ergodic utility which varies very slowly with time. However this is not the case in our setup since channel of the users is varying with time and we need to maximize (\ref{lagrange}) in each transmission interval. In the following theorem, we consider reducing this search interval.

\setcounter{MYtempeqncnt}{1}

\theoremstyle{definition}\newtheorem{theorem2}[MYtempeqncnt]{Theorem}
\begin{theorem2}\label{cluster}
Optimum value of $\theta_{BS,j}$ in (\ref{lagrange}) exists in a symmetric interval of length $2\times \frac{{{\theta }_{3dB}}}{\sqrt{2.4\ln 10}}$ centered around one of users' AoA.
\end{theorem2}
\begin{proof}
As we know, at the location of maximum of a function, its second derivative is negative. Furthermore (\ref{lagrange}) is weighted sum of shifted versions of (\ref{3D gain}). By taking the second derivative of (\ref{3D gain}) with respect to $\theta_{BS,j}$, we can find the interval in which the second derivative of the antenna pattern is negative and hence this interval contains the maximum of (\ref{lagrange}). This is due to the fact that only in a specific interval around AoA of each user the second derivative of (\ref{lagrange}) is negative. In othe words we have
\begin{equation}
\begin{gathered}
\frac{d^2 a(\theta_{BS})}{d \theta_{BS}^2}={{A}_{max}}\left( \frac{{{\left( 2.4\ln 10 \right)}^{2}}}{\theta _{3dB}^{4}}{{\left( {{\theta }_{BS}}-{{\theta }_{user}} \right)}^{2}}-\frac{2.4\ln 10}{\theta _{3dB}^{2}} \right) \hfill \\ \times \exp \left( -1.2\times \frac{{{\left( {{\theta }_{BS}}-{{\theta }_{user}} \right)}^{2}}}{\theta _{3dB}^{2}}{{\log }_{2}}10 \right)\le 0 \hfill \\
\Rightarrow \frac{{{\left( 2.4\ln 10 \right)}^{2}}}{\theta _{3dB}^{4}}{{\left( {{\theta }_{BS}}-{{\theta }_{user}} \right)}^{2}}-\frac{2.4\ln 10}{\theta _{3dB}^{2}}\le 0 \hfill \\
\Rightarrow {{\theta }_{user}}-\frac{{{\theta }_{3dB}}}{\sqrt{2.4\ln 10}}\le {{\theta }_{BS}}\le {{\theta }_{user}}+\frac{{{\theta }_{3dB}}}{\sqrt{2.4\ln 10}} \hfill
\end{gathered}
\end{equation}
\end{proof}
Fig. \ref{fig:clustering} represents this fact for $\theta_{3dB}=6^{\circ}$ and two user case with equal gains in which the array's maximum gain is normalized to unity. Instead of searching whole feasible set for $\theta$, only intervals spanned by clusters will be investigated. Clustering algorithm is presented as follow: \\

\makeatletter
\def\BState{\State\hskip-\ALG@thistlm}
\makeatother
\begin{algorithm}
\caption{Clustering Algorithm}\label{cluster}
\begin{spacing}{1.2}
\begin{algorithmic}[1]
\State Sort angles of users in each cell.
\State Put all users with angle difference less than $2 \times \frac{ {{\theta }_{3dB}}}{\sqrt{2.4\ln 10}}$ in the same cluster.
\State In each cluster users with maximum and minimum AoA represent cluster interval
\If {(a cluster contains just a single user)} only AoA of that user represents that cluster \nonumber
\EndIf
\end{algorithmic}
\end{spacing}
\end{algorithm}
Outputs of algorithm (\ref{cluster}) are tilt angle spanning interval of each cluster. It is enough to investigate spanning interval of clusters instead of whole feasible region of the tilt angle to find global maximum of (\ref{lagrange}) for tilt angle. To further reduce computational task, first we consider the tilt angle be equal to AoA of users and we choose the AoA of the user that maximizes (\ref{lagrange}). We call the user that its AoA maximizes (\ref{lagrange}) compared to other users as chosen user and the cluster containing him as chosen cluster. Then global maximum for tilt angle exists in the chosen. In order to find this global maximum, chosen cluster must be investigated. with this simplification it is enough to search spanning interval of the chosen cluster instead of whole feasible set of tilt angle. The overall algorithm for solving (\ref{goalfunc2}) is shown in Algorithm 3 and its flowchart is presented in Fig.~2.
\addtocounter{algorithm}{-1}
\makeatletter
\def\BState{\State\hskip-\ALG@thistlm}
\makeatother
\begin{algorithm}
\caption{Inner Layer Algorithm}\label{inner}
\begin{spacing}{1.4}
\begin{algorithmic}[1]
\State \textbf{Initialize:} initialize step counter $n=0$, beamforming vectors such that$ \sum\limits_{m=1}^{K}{{{\left\| \omega _{jm}^{n} \right\|}^{2}}}\le P$, filter and slack variables $\mu _{jm}^{n}=0, \quad s_{jm}^{n}=0$, quit threshold $\delta = 10^{-3}$
\State run clustering algorithm and acquire clusters
\While{$(\left| G\left(\boldsymbol W^n, \boldsymbol \theta^n \right) -G\left(\boldsymbol W^{n-1}, \boldsymbol \theta^{n-1} \right)  \right|\ge \delta )$}
\State $(n+1) \rightarrow n$ and update filters $\mu_{jm}^n$ using (\ref{optfilt})
\State update slack variables $s_{jm}^n$ using (\ref{optfilt})
\State find chosen user (the user that considering its AoA as tilt, maximizes $G\left(\boldsymbol W, \boldsymbol \theta \right)$ compared to other user) and 
\For{($i=0$ to $i= \text{length}(\text{chosen\_user\_cluster})$)}
\State $\theta_j= \text{chosen\_user\_cluster}(i)$
\State update bemforming vectors $\boldsymbol \omega_{jm}^n$ using (\ref{beamformer})
\State update $G_{\text{new}}$ using (\ref{goalfunc2})
\If{(${{G}_{new}}>G\left(\boldsymbol W^n, \boldsymbol \theta^n \right) $)} $\text{tilt}_j^{opt}=\quad \text{possible\_tilts}_j(i)$ and $\boldsymbol W_j^{opt}=\boldsymbol W^{n}$
\EndIf
\EndFor \textbf{end for}
\EndWhile \textbf{end while}
\caption{\newline chosen\_ user\_ cluster(i) is the $i$-th angle in the cluster containing chosen user (we examine each cluster spanning with known steps, for example $0.1^{\circ}$ steps)}
\end{algorithmic}
\end{spacing}
\end{algorithm}

\begin{figure}[t!]
  \includegraphics[width=\linewidth]{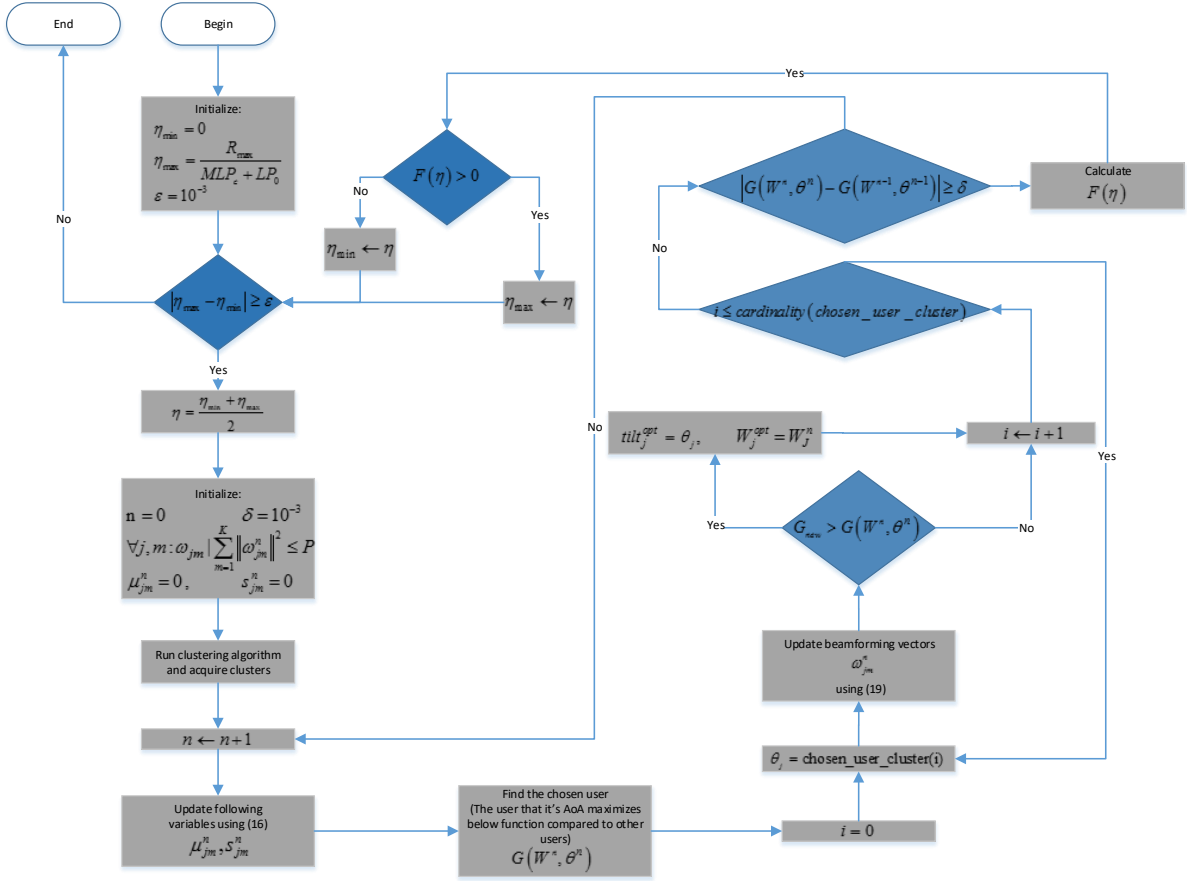}
  \caption{Flowchart of the proposed algorithms.}
  \label{fig:flowchart}
\end{figure}

\color{black}

\begin{figure}[b!]
  \includegraphics[width=\linewidth]{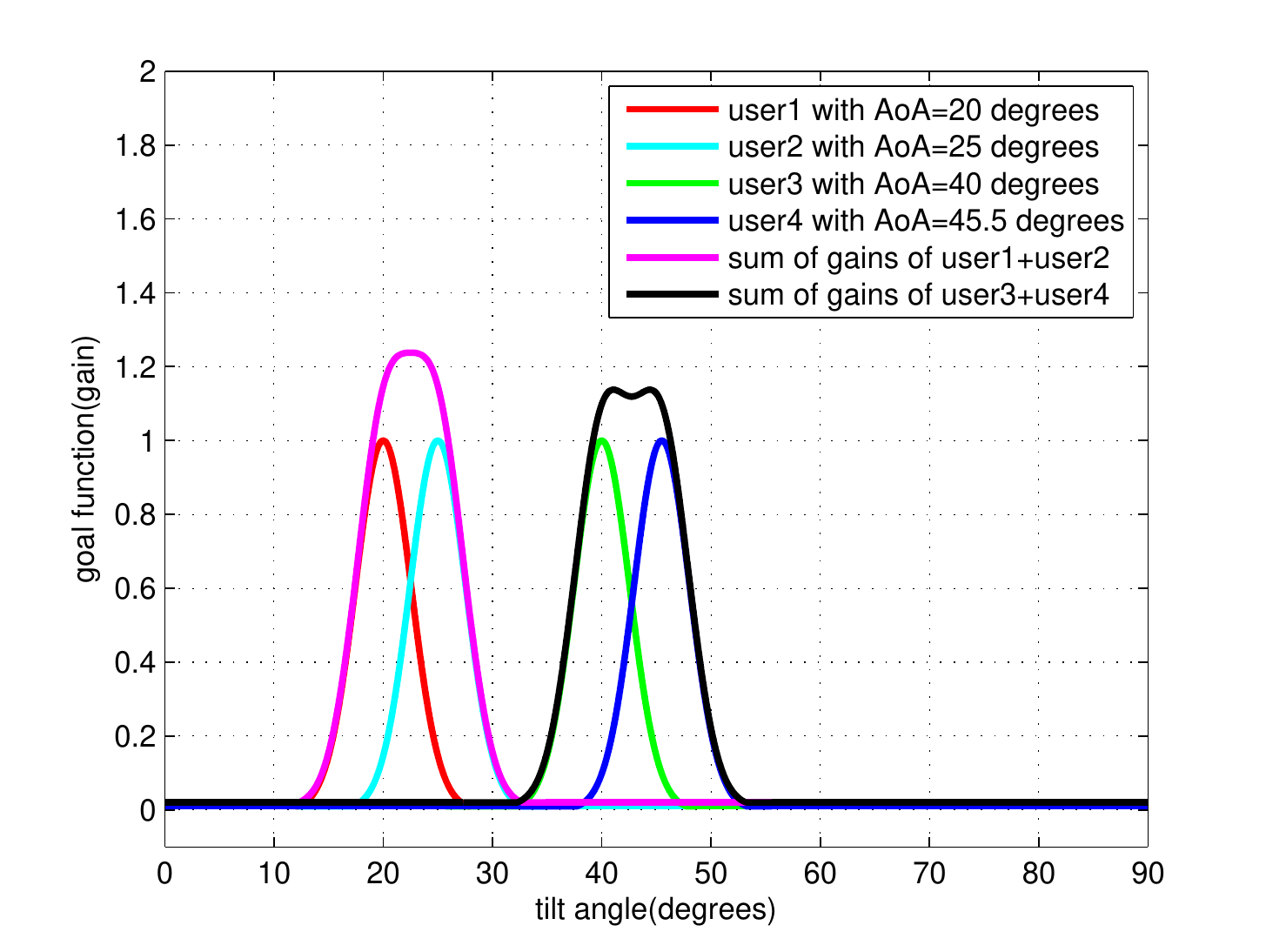}
  \caption{Clustering users based on their angle of arrival difference. (For example with $\theta_{3dB}=6^{\circ}$ clustering interval length will be $\frac{{{\theta }_{3dB}}}{\sqrt{2.4\ln 10}}\approx5.3\ $)}
  \label{fig:clustering}
\end{figure}

\section{Simulation Results} \label{sec:SIMULATION}
 We consider three cooperative adjacent cells as main interferers, i.e. $L=3$. Parameters of simulations are extracted from \cite{3GPP,Arnold}. Height of BSs and UEs are $h_{BS}=32 m$ and $h_{UE}=1.5 m$, respectively. Cell radius is set to be $R=500m$. Channel vector model is $\boldsymbol g_{ijm}= \sqrt{\beta _{ijm}}\boldsymbol h_{ijm}$ where $\boldsymbol h_{ijm}$ is i.i.d. white gaussian random vector that models small scale fading and large scale fading coefficient ${\beta _{ijm}} = \frac{{{z_{ijm}}}}{{d_{ijm}^v}}$ compromise pathloss part ${d_{ijm}^v}$ with $v=3.8$ and shadow fading effect with log-normal random variable $z_{ijm}$ with zero mean and $8$ dB standard deviation. The RF-chain power consumption is $P_c=30$ dB and BS site maintenance power consumption is $P_0=40$. Power amplifier inefficiency is considered to be $\xi = 1$ and all users have identical priority i.e. $b_{jm}=1 \; \forall j,m$. Threshold values for stopping criteria of algorithms are set to $\delta = 10^{-3}$ and $\varepsilon = 10^{-3}$. Also for taking average over user configuration and random channel realization, 2500 Monte-Carlo simulations are done and the results are average of these simulations. Transmit power constraint of the BSs' is considered to vary in the interval $22 \sim 50$ dBm.
\begin{figure}[!h]
  \includegraphics[width=\linewidth]{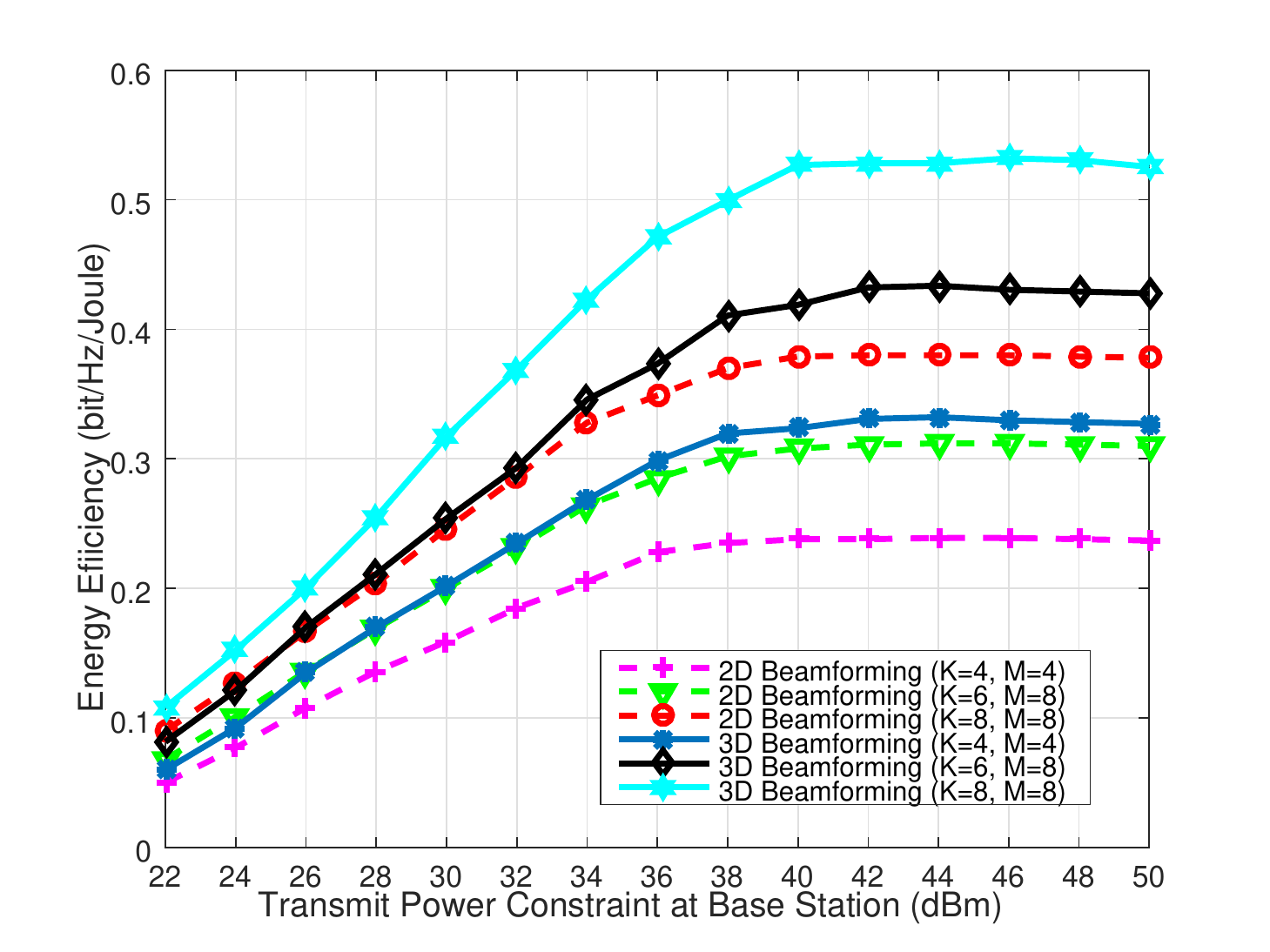}
  \caption{Comparison of 3D beamforming and conventional 2D beamforming for $M=4$ and $M=8$ number of antennas.}
  \label{fig:wholecases}
\end{figure}

In Fig.~\ref{fig:wholecases} the total instantaneous EE for 3D case and its corresponding 2D case is presented. As it is expected by increasing the maximum transmit power in the BSs, the EE increases until it reaches a saturated level and after that by increasing available power the EE does not show variations. This is due to this fact that in low SNR regime (noise limited regime) logarithm function grows faster than linear function, thus by increasing transmit power in BS the EE will increase. On the other hand in high SNR regime (interference limited regime), logarithm function growth is lower than that of linear function, hence algorithm will not use the excess transmit power available at the BS and EE converges to a saturated level. It is observed, the 3D beamforming outperforms conventional 2D beamforming with isotropic BS's antenna array gain in the vertical plane. 

Moreover by increasing the transmit power of the BSs, 3D beamforming performance gain over 2D beamforming increases as shown in Fig. \ref{fig:wholecasesgain}. This shows the potential of the 3D beamforming method for mitigating intercell interference. After increasing the available transmit power to the level needed to overcome noise (crossing noise limited regime), limiting factor will become interference. The 3D beamforming gain over 2D beamforming increases because of the ability of interference mitigation. In other words 3D beamforming is more robust to intercell interference compared with 2D beamforming.

Finally Fig. \ref{fig:exhaustive} shows performance comparison of the proposed clustering algorithm and exhaustive search over tilt angle feasible region. 
The exhaustive search algorithm solves the beamforming problem by exploring over all possible values of tilt angles in all the BSs which results an exponential complexity. By contrast, the clustering algorithm reduces the complexity of the searches by limiting the interval of possible tilt angles in the BSs. In fact, in the clustering algorithm only some candidate angles are investigated.

We see that their performance is almost equal and a little gap appears between performances in high transmit power region. This is due to approximation considered in clustering algorithm when the chosen cluster contains only one user. The optimum tilt angle in clustering algorithm in this case is considered to be tilt angle of the chosen user, while in fact the optimum tilt angle can be around this user and an interval around this user should be searched. But as we see this approximation causes a little gap compared to the optimum value but the computational complexity gain is considerable. It is notable that the performance gap appears in the interference limited regime where objective function is more sensitive to deviations from the optimum value while in noise limited regime sensitivity to deviations is low and hence there is no gap and no performance loss between two strategies.

Fig.~\ref{fig:userincrement} shows the effect of increasing the number of the users in each cell. As it can be seen, by increasing the number of the users total EE will increase in system. This implies that by adding more users, interference management capability of proposed method over the 2D technique will increase. However it is worth mentioning that by increasing the number of cells, performance gain will shrink. This is due to the fact that the number of degrees of freedom is controlled by the number of antennas and increasing number of cells only increases interference. Hence the system performance degrades. But this performance loss is insignificant due to sectoring cells and side lobe level loss of antenna patterns which heavily suppresses interference power.
\begin{figure}[!h]
  \includegraphics[width=\linewidth]{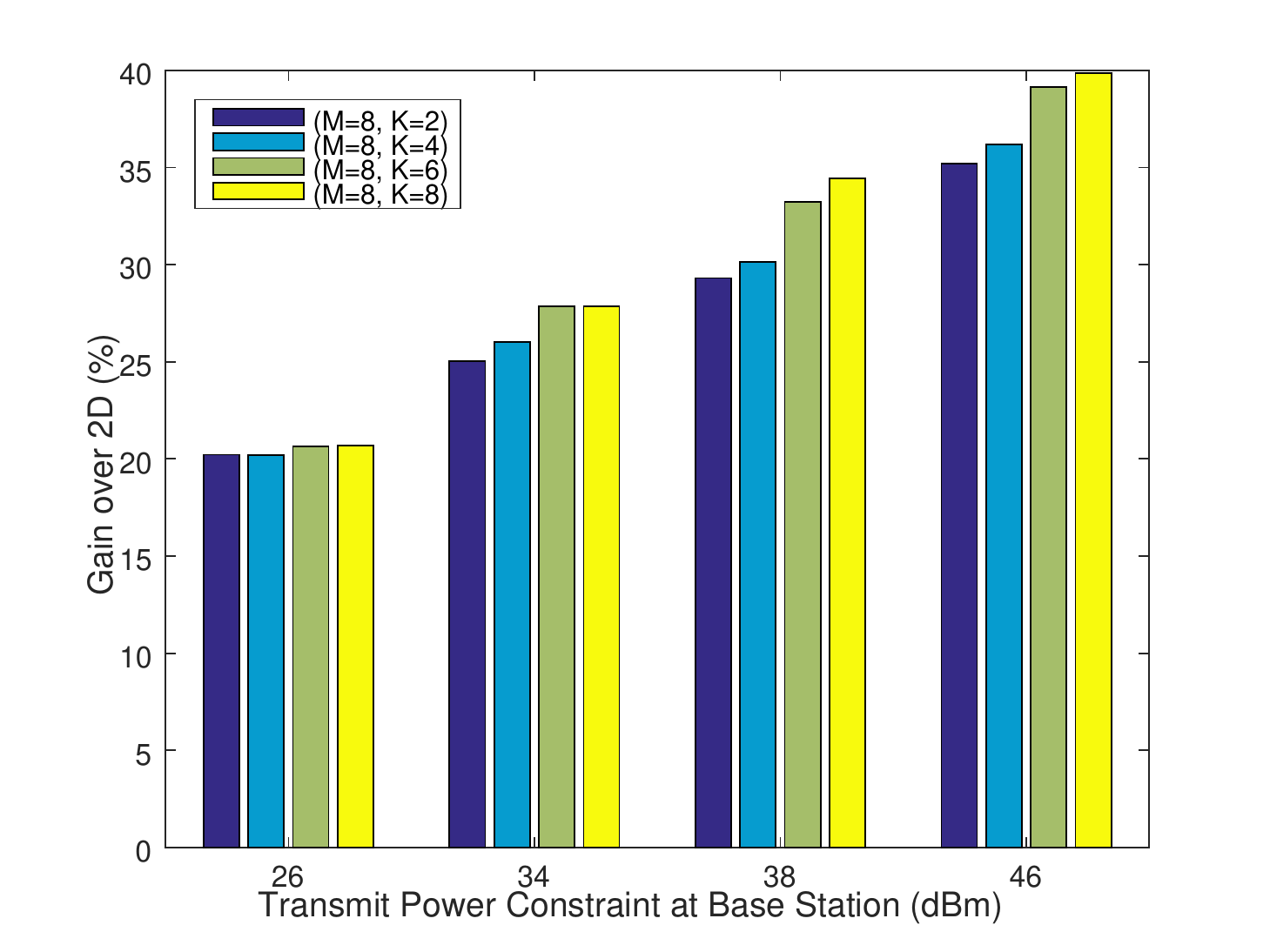}
  \caption{Gain of 3D beamforming over 2D beamforming expressed in percents for different values of transmit power available in BS.}
  \label{fig:wholecasesgain}
\end{figure}

\begin{figure}[h!]
  \includegraphics[width=\linewidth]{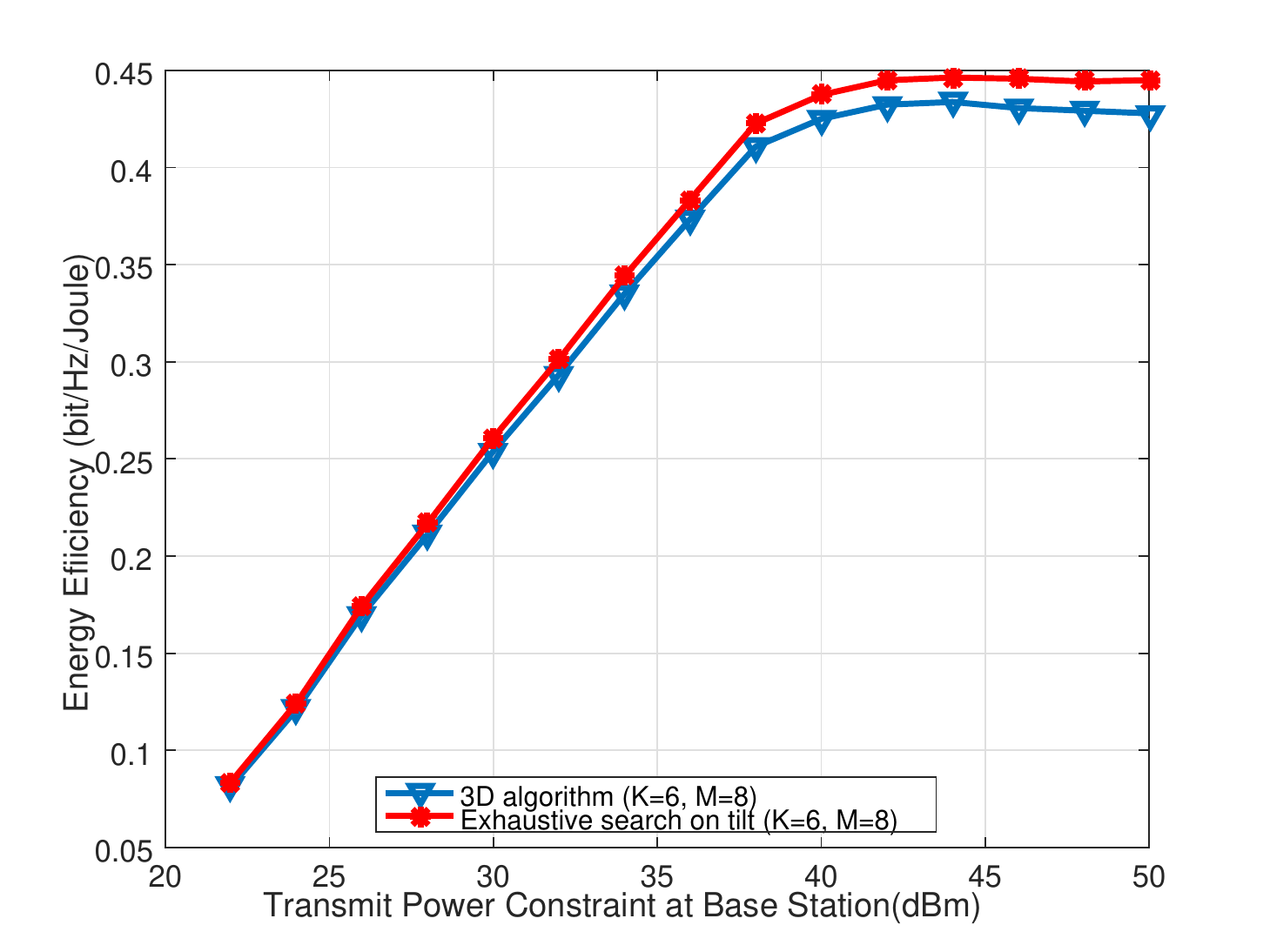}
  \caption{Performance comparison between clustering algorithm and exhaustive search over tilt angle of BS.}
  \label{fig:exhaustive}
\end{figure}

\begin{figure}[!h]
  \includegraphics[width=\linewidth]{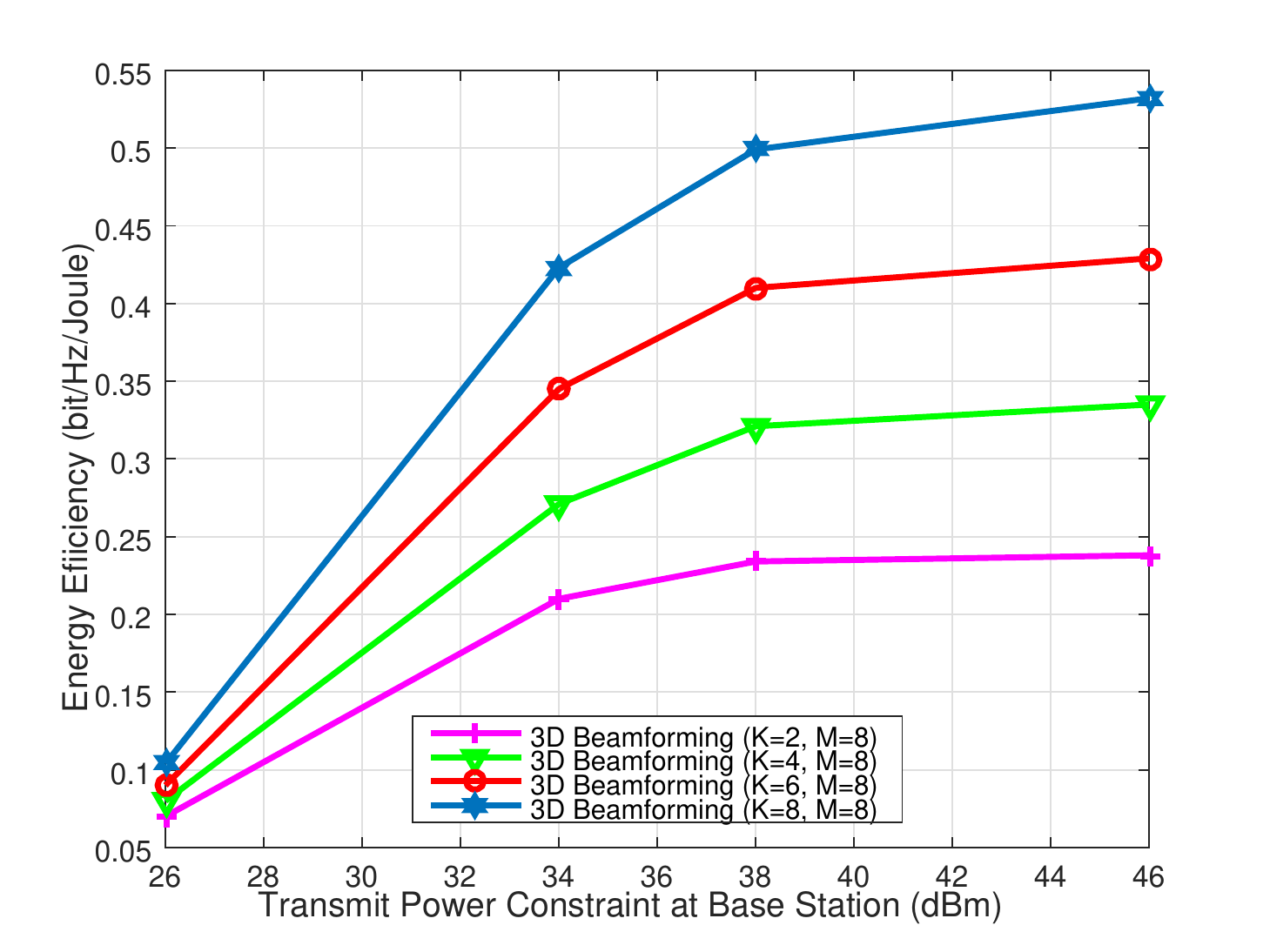}
  \caption{Impact of increasing the number of users in each cell.}
  \label{fig:userincrement}
\end{figure}

\section{Conclusion} \label{sec:CONCLUSION}
Downlink transmission in a cellular network with hexagonal shaped cells that  simultaneously employs beamforming and tilt angle adaptation of the BSs' antenna array has been investigated. First by employing fractional programming and introducing new variables, problem of maximizing total instantaneous EE with respect to beamforming vectors and tilt angle of BSs' of the network has been converted into multiple tractable problems which are distributed between BSs. Closed form solution of the optimum beamforming vectors have been found which are functions of channel gain of users, users' angle of arrival and tilt angle of the BS. Then finding optimum tilt angle has been investigated which is a non-convex problem and has multiple maximums. To overcome this difficulty, the optimum tilt angle has been found efficiently using a clustering algorithm. This algorithm classifies users based on their angle of arrival to the BS's antenna array and then aims the array's beam to a specific cluster that maximizes total instantaneous EE of the network. The clustering algorithm has significantly less complexity compared to exhaustive search over tilt angle but its performance is near the same as exhaustive search.

~\\~\\

\textbf{Soheil Khavari} received his B.Sc. from Amirkabir University of Technology (AUT), Tehran, Iran, in 2013 and his M.Sc. from Iran University of Science and Technology (IUST), Tehran, Iran in 2016, all in electrical engineering. His current research interests include wireless communication and signal processing.

~\\~\\

\textbf{S. Mohammad Razavizadeh} is an assistant professor in the school of Electrical Engineering at Iran University of Science and Technology (IUST), Tehran, Iran. His research interests are in the area of signal processing for wireless communications and cellular networks. He is a Senior Member of the IEEE.

\end{document}